\theoremstyle{definition}
\newtheorem{definition}{Definition}[section]
\newtheorem{theorem}{Theorem}[section]
\newcommand{\ie}{{\em i.e., }}
\newcommand{\etal}[1]{{\em et al.}~\cite{#1}}
\begin{document}
%
\title{The entropy rate of Linear Additive Markov Processes}
%
%
%

\author{Bridget~Smart, Matthew~Roughan and Lewis~Mitchell%
\thanks{B. Smart, M. Roughan and L.Mitchell are with the University of Adelaide. Email \{bridget.smart, matthew.roughan, lewis.mitchell\} @adelaide.edu.au}
\thanks{B. Smart would like to acknowledge the support of a Westpac Future Leaders Scholarship. M.~Roughan and L.~Mitchell are supported by the Australian Government through the Australian Research Council’s Discovery Projects funding scheme (project DP210103700).}%
}

\markboth{}
{Entropy of Linear Additive Markov Processes}

%

\maketitle
\section{Abstract}
This work derives a theoretical value for the entropy of a Linear Additive Markov Process (LAMP), an expressive model able to generate sequences with a given autocorrelation structure. While a first-order Markov Chain model generates new values by conditioning on the current state, the LAMP model takes the transition state from the sequence's history according to some distribution which does not have to be bounded. The LAMP model captures complex relationships and long-range dependencies in data with similar expressibility to a higher-order Markov process. While a higher-order Markov process has a polynomial parameter space, a LAMP model is characterised only by a probability distribution and the transition matrix of an underlying first-order Markov Chain. We prove that the theoretical entropy rate of a LAMP is equivalent to the theoretical entropy rate of the underlying first-order Markov Chain. This surprising result is explained by the randomness introduced by the random process which selects the LAMP transitioning state, and provides a tool to model complex dependencies in data while retaining useful theoretical results. We use the LAMP model to estimate the entropy rate of the LastFM, BrightKite, Wikispeedia and Reuters-21578 datasets. We compare estimates calculated using frequency probability estimates, a first-order Markov model and the LAMP model, and consider two approaches to ensuring the transition matrix is irreducible. In most cases the LAMP entropy rates are lower than those of the alternatives, suggesting that LAMP model is better at accommodating structural dependencies in the processes.

 \begin{IEEEkeywords}
LAMP processes, entropy, Markov processes, long-range dependency, stochastic processes.
\end{IEEEkeywords}

\section{Introduction}
 
Markov processes are a simple model with wide-ranging applications. They rely on the Markov property, where the next state is only dependent on the current state, and is conditionally independent of the history of the process. The model's simplicity allows theoretical results to be easily derived and understood, and the flexibility of the model's framework makes it a useful tool which can be easily extended, and has consequently been used in a vast number of applications. 

However, many real sequences have correlations with a longer or more complex history than the simple Markov property allows~\cite{leland94:_self_simil_natur_ether_traff_exten_version , Beran}. It is not uncommon to find correlations in real data that are so strong that the process is considered to have long-range dependence, which is defined by the tail of the autocorrelation function decaying so slowly that sums over the tail diverge. Other processes with only short-range correlations may still exhibit complex structure of interest. 

Markov processes do have a correlation structure, and this can even be long-range dependent for infinite-state Markov chains \cite{carpio2007long}, but the autocorrelation structure of a Markov process is difficult to tune, reducing their usefulness in applications that require control over the autocorrelation structure. 

Higher-order Markov processes allow the next transition to depend on the previous $m$ states, where $m$ describes the order of the process, and thus allow longer correlations to be built into the process. For a $m$th-order Markov Process with $n$ possible states, modelling these transitions requires $(n-1) n^m$ parameters, so higher-order models are prone to overfitting as $m$ increases. 

The Linear Additive Markov Process (LAMP), proposed by Kumar~\etal{10.1145/3038912.3052644}, overcomes these challenges. The LAMP can fit a measured autocorrelation structure even for
sequences that exhibit long-range dependence. An expressive model, the LAMP model has been shown to have comparable performance to deep sequential models despite their relatively small parameter space \cite{10.1145/3038912.3052644}. While LAMPs are not as expressive as higher-order Markov processes, their straightforward theoretical results and explainability make them a good model for time-series data where the measured correlation structure is of interest.

This work proves that the entropy rate of a LAMP is equal to that of the underlying first-order Markov Chain. The result is surprising because the structure introduced by the LAMP plays no part in entropy rate, despite having a strong impact on the structure of the process, for instance, through the autocorrelation. 
A consequence of this result is that the LAMP is able to distinguish between sequences that might look very different due to the presence of long-range dependency (LRD), yet have the same entropy rate -- providing a further useful application of the model.

In summary, the contributions of the paper are:
\begin{enumerate}
    \item A closed form solution giving the entropy rate of a LAMP process, and showing that the entropy rate is exactly that of the underlying Markov chain regardless of the autocorrelation.
    
    \item Use of this result to construct an estimator which is applied to four publicly available datasets to obtain entropy estimates, close to or lower than values obtained using a first-order Markov model. Estimated entropy values are 4.88 bits/symbol for the Lastfm dataset, 2.49 bits/symbol for the BrightKite dataset, 3.18 bits/symbol for the Wikispeedia dataset and 3.61 bits/symbol for the Reuters dataset.
    
    \item An exploration of the impacts of forcing ergodicity in a transition matrix on entropy estimates. This work considers two approaches, using only the largest connected component and adding an artificial fully connected node, with entropy estimates similar between each approach. Analysis into the stability of the estimate against the artificial transition probability was undertaken, and a robust value of $p_{\text{artificial}} = 2^{-15}$ is recommended.

\end{enumerate}

\section{Background}

We start by defining a standard discrete-time, time-homogeneous Markov chain $X_t$, a process which underlies the LAMP model. The definitions and notation are standard, but we include them to ensure there is no confusion. 

Let $S$ be a state space, which contains $n$ states. Without loss of generality, we label the states $S=[n]=\{1,2,\ldots, n\}$. We take $P$ to be the $n \times n$ stochastic matrix which defines the transitions between states for the underlying first-order Markov process. It follows that the row sums of $P$ are all equal to 1, and each element is non-negative.

In a first-order Markov process, the probability of transitioning from state $x_{t-1}$ to $x_t$ is given by,
\begin{align*}
&\Pr(X_n=x_n | X_0=x_{0},...,X_{n-1}=x_{n-1}) \\
&= \Pr(X_n=x_n |X_{n-1}=x_{n-1})&\\ &= P_{x_{n-1}, x_{n}},
\end{align*}
by the Markov property. We also have that the stationary distribution of the Markov process is denoted by the vector $\pi$, which is the left eigenvector of $P$ corresponding to the eigenvalue of 1. That is, $\pi$ satisfies the equation
$$
\pi P = \pi.
$$

If $P$ defines an ergodic Markov process, this stationary distribution is unique.

The above results can be generalised to countable state processes, however, a finite-state Markov chain cannot reproduce long-range dependency structures. LAMPs afford control over the correlations within the process without requiring an infinite state space.

\subsection{Linear Additive Markov Process (LAMP)}

A Linear Additive Markov Process (LAMP) is defined using a first-order Markov process and a probability distribution $w$ on the positive integers that we refer to as the LAMP kernel. Assuming we are at time $n-1$ and wish to calculate the next state $X_{n}$ we adopt the following two stage process:
\begin{enumerate}
    \item Select a past state $x_{n-q}$ with probability $w_q$. This state is also referred to as the \textit{transition state}, $X_{T_n}$.
    
    \item Determine the state $X_n=x_n$ according to the probabilities $P_{x_{n-q}, x_{n}}$.
    
\end{enumerate}
That is, we use the same state transition probabilities as in the underlying Markov chain, but the historical state on which the transition is based is randomly chosen according to $w$. 

In practice, there are some extra details: (i) the distribution $w$ is typically assumed to have finite support $[k] = \{1,2,\ldots, k\}$ but it need not be finite, and (ii) the state that is chosen must be from the existing history, which does not extend $k$ steps initially, and so the state chosen as the jumping off point is masked according to $x_{\max\{0,n-q\}}$. The value of $k$ gives the {\em order} of the LAMP.

An alternative equivalent definition is provided in \cite{10.1145/3038912.3052644} as follows:

\begin{definition}[\textbf{LAMP Transitions}]
Given a stochastic matrix $P$ and a distribution $w$ on $[k]$, the $k$th-order LAMP evolves according to the following transition probabilities:
$$
Pr( X_n=x_n| x_0,...,x_{n-1} ) = \sum_{q=1}^k w_q P_{x_{\max\{0,n-q\}},x_n}
$$
\end{definition}

In their defining paper Kumar~\etal{10.1145/3038912.3052644} provide a number of important results:
\begin{enumerate}
    \item There exists a LAMP of order $k$ that cannot be approximated with any constant factor by a higher order Markov process of order $m-1$. So the model has more expressive power than a conventional Markov chain, with an exponentially smaller number of parameters. 

    \item Under mild regularity conditions (\ie Ergodicity of the underlying Markov chain) LAMPs possess a limiting equilibrium distribution which is the same as that of the underlying simple Markov chain, \ie the stationary distribution $\pi$. 
    
    \item The choice of the kernel $w$ determines how the history is included into the next state, and so can incorporate correlations extending as far back as desired, including long-range correlations (for $k \rightarrow \infty$). It is not stated, but it is plausible that this is closely related to the convolution of the conventional autocorrelation of the underlying Markov chain and the LAMP kernel $w$. We show this in action in some later examples.
    
    \item  Kumar~\etal{10.1145/3038912.3052644} provide a measure of the speed of convergence to the equilibrium distribution, though interestingly these results only apply where $w$ has a finite fourth moment, and so these convergence estimates do not apply in the case of a long-range dependent process. 
    
\end{enumerate}

The authors also provide an approximate Maximum Likelihood Estimator (MLE) to learn a LAMP from data using alternating estimation of $w$ and $P$. 

Thus LAMPs are a simple and expressive model,  characterised by a 
(first-order Markov) transition matrix and a discrete probability distribution on the non-negative integers. Their efficient parameterisation and simple theoretical results mean they can be easily learned and fit to data.

Kumar~\etal{10.1145/3038912.3052644} provide many useful results for their model but do not provide an explicit formula for the entropy rate of the process, which is the topic of this paper. 

\subsection{Entropy and Entropy Rate}

The entropy rate of a sequence is a quantity that describes the self-information the sequence contains. It also represents an asymptotic lower bound on the lossless compression ratio of the sequence and can be used to measure how predictable a sequence is \cite{CoverThomas}. 

For a discrete random variable, $X$, the Shannon entropy is denoted $H$, and is defined to be
$$
H(X) = -\sum_{x} p(x) \log p(x),
$$
where $p(\cdot)$ is the probability distribution of $X$, and with the convention that $0 \log 0 = 0$.
The joint entropy of $n$ random variables $X_i$ is the obvious extension:
\begin{flalign*}
    &H(X_1, \ldots, X_n) \\
    &= -\sum_{x_1,\ldots, x_n} p(x_1, \ldots, x_n)   \log p(x_1,\ldots, x_n).
\end{flalign*}

The {\em entropy rate} for a process ${\mathcal X}$ is just the limiting average of the joint entropy, 
$$
H({\mathcal X}) = \lim_{n \rightarrow \infty} \frac{1}{n} H(X_1, \ldots, X_n).
$$

A standard result \cite[Thm 4.2.1]{CoverThomas} links the entropy rate to an equivalent defined in terms of conditional probabilities,
$$
H'({\mathcal X}) = \lim_{n \rightarrow \infty} H(X_n | X_1, \ldots, X_{n-1}).
$$
For a stationary process the limits $H$ and $H'$ exist and are equal~\cite[Thm~4.2.1]{CoverThomas}, \ie $H({\mathcal X}) = H'({\mathcal X})$. That is, for a stationary stochastic process, the conditional entropy rate and per symbol entropy rate of $n$ random variables are equal in the limit. 

The result is convenient, particularly for computing the entropy rate of a stationary Markov chain, which is given by \cite[Thm 4.2.4]{CoverThomas} as
\begin{equation*}
    H({\mathcal X}) = - \sum_{ij} \pi_i P_{ij} \log P_{ij},
\end{equation*}
where $P$ is the probability transition matrix, and $\pi$ the stationary distribution. The primary result of this paper is that this formula also provides the entropy rate of a LAMP. 

\section{The entropy rate of a LAMP}

\begin{theorem}[\textbf{Entropy rate of a LAMP}]
For a LAMP defined by an underlying stationary, first-order Markov Chain with transition matrix $P$, and kernel distribution $w$ on $[k]$, the entropy rate of the LAMP is
\begin{equation}
    H({\mathcal X}) = - \sum_{ij} \pi_i P_{ij} \log P_{ij},
    \label{eqn:markov_entropy}
\end{equation}
where $\pi$ is the stationary distribution of the Markov chain.
\end{theorem}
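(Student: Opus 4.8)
\emph{Proof proposal.} The plan is to compute the entropy rate through its conditional form. Since the LAMP converges to the equilibrium $\pi$ of the underlying chain, the Ces\`aro limit defining the entropy rate is unaffected by the initialisation, and I would work with the stationary version of the process, for which $H(\mathcal{X}) = H'(\mathcal{X}) = \lim_{n\to\infty} H(X_n \mid X_1,\dots,X_{n-1})$ by the result quoted above. So it is enough to evaluate this limit.

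The natural device is to expose the latent randomness. Enlarge the probability space by the i.i.d.\ lag sequence $Q_1,Q_2,\dots$, each with law $w$ on $[k]$ and independent of everything else (with the usual boundary masking $X_{\max\{0,n-q\}}$ near the start), so that conditionally on $Q_n=q$ and the history, $X_n$ has distribution $P_{X_{n-q},\cdot}$. First I would condition additionally on $Q_n$ and compute
$$
H(X_n \mid X_1,\dots,X_{n-1}, Q_n) \;=\; \sum_{q=1}^{k} w_q \sum_i \Pr(X_{n-q}=i)\Bigl(-\sum_j P_{ij}\log P_{ij}\Bigr).
$$
Using that $Q_n$ is independent of $X_1,\dots,X_{n-1}$, that $\Pr(X_{n-q}=i)=\pi_i$ at stationarity, and that $\sum_q w_q=1$, the right-hand side collapses to exactly $-\sum_{ij}\pi_i P_{ij}\log P_{ij}$, the claimed value. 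In words: if the sampled past index were revealed, a LAMP step would be information-theoretically indistinguishable from a single step of the underlying chain.

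The remaining --- and I expect decisive --- step is to discard the conditioning on $Q_n$, i.e.\ to show $H(X_n \mid X_1,\dots,X_{n-1}) = H(X_n \mid X_1,\dots,X_{n-1},Q_n)$, equivalently $I(X_n; Q_n \mid X_1,\dots,X_{n-1}) \to 0$; in the $\tfrac1n H(X_1,\dots,X_n)$ picture this is $\tfrac1n I(Q_1,\dots,Q_n; X_1,\dots,X_n)\to 0$, namely that the observed trajectory reveals only a vanishing per-symbol amount about which historical states served as transition states. The heuristic is that the selection randomness is ``spent'' in generating the path and is not recoverable from it. Making this quantitative and uniform in the history is where the work lies, and is the step I would scrutinise most: it is the crux on which the whole theorem rests and is not obvious a priori, so I would try to bound the posterior dispersion of the $Q_t$ given $X_1,\dots,X_n$ and check that it does not accumulate linearly in $n$ --- in particular I would want to verify it carefully on small examples with a genuinely random kernel $w$ and a non-degenerate $P$ before trusting it in general.
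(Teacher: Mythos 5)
Your proposal follows the same route as the paper's own proof: condition on the latent lag $Q_n$, observe that given $Q_n=q$ a LAMP step is information-theoretically a single step of the underlying chain, so that $H(X_n\mid X_0,\dots,X_{n-1},Q_n)=-\sum_{ij}\pi_i P_{ij}\log P_{ij}$, and then try to discard the conditioning on $Q_n$. The difference is that you explicitly flag the last step as the crux, whereas the paper simply writes $H(X_n\mid X_{n-1},\dots,X_0)=\sum_q \Pr(T_n=n-q)\,H(X_n\mid X_{n-q},q)$ as though the conditional entropy were by definition a $w$-weighted average over lags. It is not: that weighted average is $H(X_n\mid X_0,\dots,X_{n-1},Q_n)$, and
\begin{equation*}
H(X_n\mid X_0,\dots,X_{n-1}) = H(X_n\mid X_0,\dots,X_{n-1},Q_n) + I(X_n;Q_n\mid X_0,\dots,X_{n-1}),
\end{equation*}
so the identity you defer is exactly the vanishing of the conditional mutual information you single out. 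Neither you nor the paper proves it; the paper does not even acknowledge that something needs proving.

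Your instinct to test this step on small examples before trusting it is exactly right, because it fails in general. Given the history, the law of $X_n$ is the mixture $\sum_q w_q P_{x_{n-q},\cdot}$, and by strict concavity of entropy $H\bigl(\sum_q w_q P_{x_{n-q},\cdot}\bigr)\ge\sum_q w_q H\bigl(P_{x_{n-q},\cdot}\bigr)$, with equality only when the rows $P_{x_{n-q},\cdot}$ coincide for all $q$ in the support of $w$. Concretely, take two states, $w=(1/2,1/2)$ on $\{1,2\}$, and $P_{12}=P_{21}=1-\epsilon$: whenever $x_{n-1}\ne x_{n-2}$ the mixture $\tfrac12 P_{x_{n-1},\cdot}+\tfrac12 P_{x_{n-2},\cdot}$ is exactly uniform and contributes one full bit, and this event has stationary probability bounded away from zero (if $x_{n-1}=x_{n-2}$ then $X_n\ne X_{n-1}$ with probability $1-\epsilon$, so consecutive states disagree at least every other step). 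Hence the LAMP's conditional entropy per symbol stays bounded away from zero while $-\sum_{ij}\pi_iP_{ij}\log P_{ij}\to 0$ as $\epsilon\to 0$. So the gap you identified is not closable in the stated generality: what your computation (and the paper's) actually establishes is that $-\sum_{ij}\pi_i P_{ij}\log P_{ij}$ equals the entropy rate of the LAMP \emph{given the lag sequence as side information}, and is therefore only a lower bound on $H(\mathcal{X})$, with equality in degenerate cases such as $w$ concentrated on a single lag or all relevant rows of $P$ being identical.
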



\begin{proof}
We begin by considering the entropy rate of $n$ realisations from a LAMP, using known results from information theory about conditional entropy and the entropy rate of a first-order Markov Chain. As noted above, for a stationary Markov chain
\[ H({\mathcal X}) = H'({\mathcal X}). \]
Assume that at time $n-1$ we choose $X_{T_n} = X_{n-q}$ to be the transition state according to distribution $w$. The state at time $n$ depends only on $X_{n-q}$ and so the conditional entropy 
\[  H(X_n | X_{n-1},...,X_0,q) 
    = H(X_n | X_{n-q}, q ),
\]
where we include $q$ explicitly in the conditioned term for notational convenience. A transition from state $X_{n-q} \rightarrow X_n$ in the LAMP is governed by the same probability transition matrix $P$ as the transition $X'_{n-1} \rightarrow X'_n$  in the underlying Markov chain (we use $X'$ here to denote the underlying Markov chain). Furthermore \cite{10.1145/3038912.3052644} shows that if the Markov chain corresponding to $P$ is ergodic, then its stationary distribution $\pi'$ is also the stationary distribution of the LAMP, so we denote both by $\pi$. Hence, the conditional entropy is the entropy rate of the Markov chain, \ie 
\begin{eqnarray*}
  H(X_n | X_{n-q}, q ) 
   & = & H(X'_n | X'_{n-1} ) \\
   & = & - \sum_{ij} \pi_i P_{ij} \log P_{ij}.
\end{eqnarray*}
by \cite[Thm~4.2.4]{CoverThomas}. Note that this decouples the entropy rate from the choice of $q$ (assuming $n$ is large enough that $n-q \geq 0$ for all choices of $q$, or alternatively presuming an infinite history of the stationary process). 

Now, we can remove the conditioning on $q$ by noting that the overall conditional entropy is a probabilistically weighted sum over the choices of $q$, \ie
\begin{eqnarray*}
    \lefteqn{H(X_n | X_{n-1},...,X_0)} \\
    & = &  \sum_k  P(T_n = n-q)  H(X_n|X_{n-q},q) \\
    & = &  - \sum_{ij} \pi_i P_{ij} \log P_{ij} \sum_q w_q.
\end{eqnarray*}
Noting that $w$ is a proper probability distribution, its sum is 1, and hence the conditional entropy is given by the constant term, and hence the limit is trivial and the result follows. 
$$
H(\chi_{LAMP}) = H(\chi),
$$
where $H(\chi)$ is the entropy of the underlying first-order Markov Chain.
\end{proof}




\begin{figure*}
    \centering
    \includegraphics[width=\textwidth]{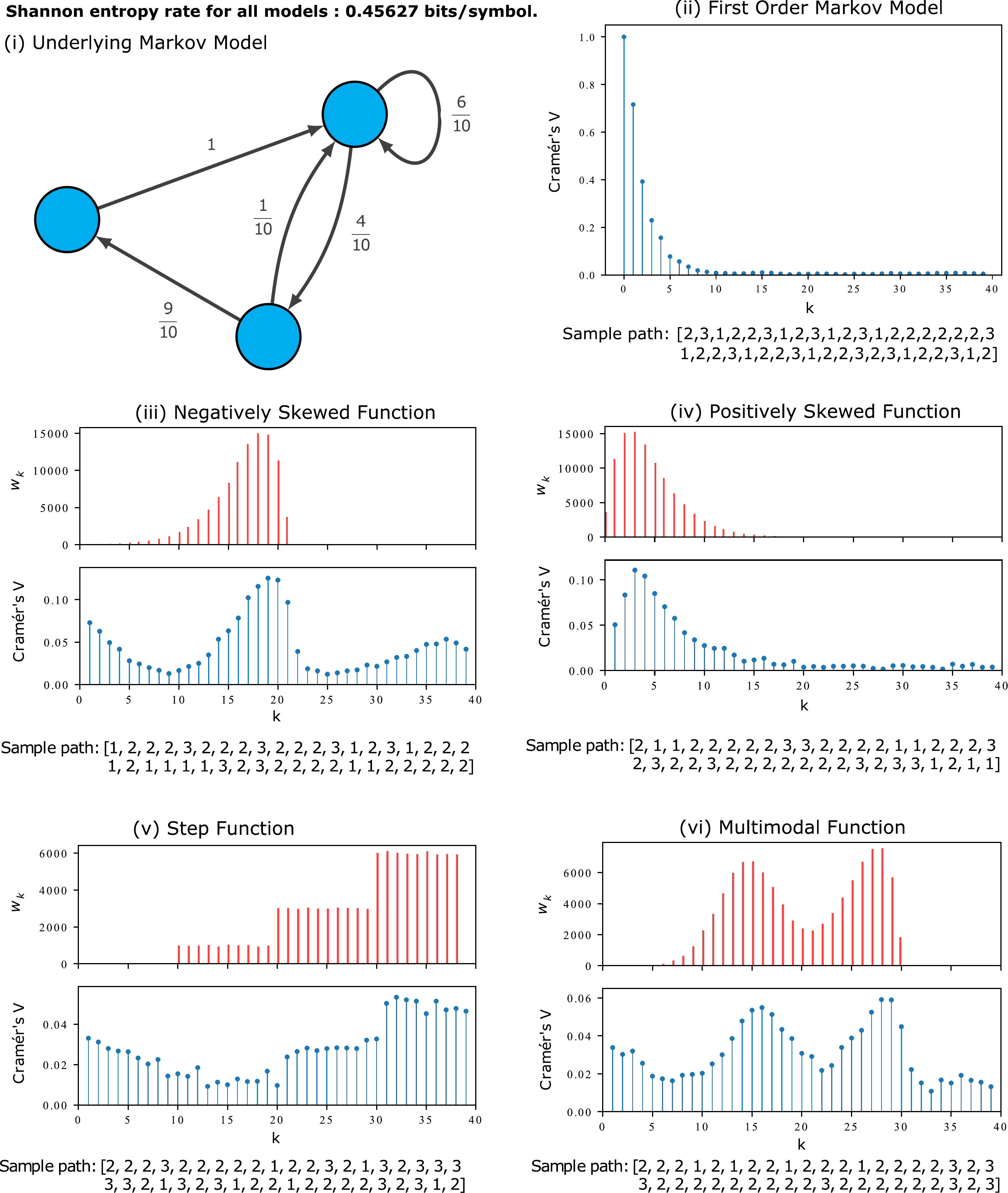}
    \caption{(i) A diagram of the underlying transition matrix for each of the models shown in (ii) - (vi). While these models all produce sample paths with differing dependency structures, they all have the same entropy rate, as proven in Theorem 4.1. In blue, the dependency structure of each model is visualised using Cramér's V for lags up to 40. This provides a measure of correlation for symbolic variables and is analogous to considering a lagged-cross correlation function. In (ii), the dependency structure is shown for a first-order Markov model. In (iii) - (vi) we see the frequency of $w_k$, or the distribution of the size of the backward step directly influences the shape of the dependency structure. While the Cramér's V is calculated for a lag of 0 for the first-order Markov Model, it is excluded from plots (iii)-(vi) since the value is much larger due to the Markov Chain's memoryless property. It was excluded to allow the similarity between the frequency distribution of step size and dependency structure to be more easily seen.}

    \label{fig:w_various}
\end{figure*}

The most noteworthy feature of this result is that LAMP kernel $w$ plays no part in the formula. This is surprising because the kernel clearly affects the autocorrelation structure of the process. Naive intuition suggests that a process with stronger autocorrelations should present less information at each time step, and hence have a smaller entropy rate.

While at first surprising, this result is intuitive as once the conditioning state, $X_{T_n}$ is selected, the process transitions in the same manner as a first-order Markov Chain. The randomness introduced by the choice of $k$, balances with the increased predictability which correlation with the history of the process provides.

Alternatively, we might think of this as increasing the correlations in the process, while also increasing the variance per time step, maintaining a constant entropy rate. However, we work here in the context of a finite alphabet of (not necessarily ordinal) states, so the common notions of variance have been subtly generalised by the LAMP model. 

This useful result provides a model able to capture complex dependency structures in data but maintain consistent randomness on a symbol by symbol level.

\section{Model Comparisons}

To demonstrate the significance of the LAMP model, we will consider the dependency structure of 5 comparable models: a first-order Markov model, and  four alternative LAMP models with differing $w_k$ (\autoref{fig:w_various}). Each of these models share the same transition matrix illustrated in \autoref{fig:w_various}~(i). Since they share a transition matrix, these sequences all share the same stationary distribution and entropy rate of 0.45627 bits per symbol, so each model is superficially similar.

The dependency structure is measured by calculating Cramér's~V statistic for various lags. Based on Pearson's Chi-Squared test, the Cramér's~V statistic measures the association between two discrete, nominal variables. To measure the self association for a sequence with a given lag $l$, pairs of values from the sequence at index $i$ and $i+l$ are used to construct a contingency table, which is used to calculate Cramér's~V statistic. The statistic ranges from 0 to 1, where 0 indicates no correlation, and 1 indicates a perfect relationship between the pairs of variables. Calculating this value for various values of $l$ produces a measure of dependency structure comparable to a lagged cross-correlation plot, but without assigning ordinal value to the state labels. 

To illustrate the autocorrelation structure for a standard first-order Markov model, it is shown in plot \autoref{fig:w_various}~(ii) along with a short sample path. \autoref{fig:w_various}~(iii)-(vi) show the Cramér's V Statistic for the LAMP models in comparison to the choice of $w$. Note that lag = 0 is not plotted in these cases, as this value will always be 1.

It is evident that the dependency structure mirrors the shape of $w_k$, demonstrating that the LAMP framework provides a useful model for generating sequences with a desired dependency structure, although it is important to note this structure will repeat and diminish for integer multiples of the original lag, as well as including structure from the original Markov model, so the two are not identical. However, there is an evident ability to tune the autocorrelation of a LAMP, while preserving other properties such as the stationary distribution and entropy rate. 

\section{Empirical Evaluation}

Following the same approach as the original LAMP paper~\cite{10.1145/3038912.3052644}, we fit and estimate Shannon Entropy rates on four publicly available datasets, \textsc{Lastfm}, \textsc{BrightKite}, \textsc{Wikispeedia} and \textsc{Reuters}. 
Each dataset is comprised of a number of distinct sequences which represent the activity of an individual or a single process. 
We compare the entropy rate estimates of (1) an empirical estimator, (2) the estimator obtained by fitting a Markov chain, and (3) the entropy estimate obtained by fitting a LAMP.

Our code is available at \url{github.com/bridget-smart/LAMPEntropyEstimates} and the links to each dataset are provided in the text below.

The \textsc{Lastfm} dataset contains the user activity from the music streaming service \url{last.fm} for 992 users. Users can choose to listen to stations based on a genre or artist, a particular song, and can share their listening activity. The dataset contains the user, timestamp, artist and song, but here we construct sequences using only the user and artist. Each sequence contains data for a single user, with items representing different artists. Containing 19M items, the data is available at \url{http://ocelma.net/MusicRecommendationDataset/lastfm-1K.html}.

The \textsc{BrightKite} dataset contains check-in data from the location-based social networking service BrightKite, where users share their checked-in locations. Each sequence represents a single user, with items representing the locations which the user has checked into. 772,967 locations and 51,406 users are represented in the dataset. The data is available at \url{https://snap.stanford.edu/data/loc-brightkite.html}.

The \textsc{Wikispeedia} navigation path dataset is comprised of sequences representing a path along Wikipedia links from a condensed version of Wikipedia. These sequences are collected using the online game \textsc{Wikispedia} where users race to navigate between articles. This dataset contains 51,318 complete paths. The dataset is available at \url{https://snap.stanford.edu/data/wikispeedia.html}.

For the \textsc{Reuters} dataset we use the Reuters-21578, Distribution 1.0 benchmark corpus, containing newswire articles, with each sequence representing a single article and items representing distinct words. This dataset is available at \url{www.nltk.org/nltk_data}.

Following the same process as in \cite{10.1145/3038912.3052644}, consecutive repeated values within each path repeated values are removed to prevent self-loops from appearing in the transition matrices. Consecutive repeated values are also not meaningful in our datasets, as they do not contain temporal data. We also replace values which appear less than 10 times, or less than 50 times for the \textsc{Lastfm} dataset with a unique token, representing a less frequently visited state. This preprocessing ensured our model fit was consistent with the original paper, and across both the LAMP and first-order Markov models.

LAMP models are fit to each dataset using the original LAMP code available at \url{github.com/google-research/google-research/tree/master/lamp}. Modified code used to obtain the LAMP transition matrices for this paper is available at \url{github.com/bridget-smart/modified_lamp}. The code used to obtain the entropy estimates is available at \\\url{github.com/bridget-smart/LAMPEntropyEstimates}.

While a modified version of the author's original code was used to fit the LAMP model, care was taken to exactly reproduce the preprocessing steps outlined in Kumar~\etal{10.1145/3038912.3052644} when fitting the remaining models. The code from Kumar~\etal{10.1145/3038912.3052644} fits the LAMP model using a subset of the data and a cross-validation technique to overcome computational and time constraints. Therefore, the estimate using the LAMP model was fit on a smaller subset of the data than other estimates. Discrepancies in the transition matrix size between the first-order Markov and LAMP models indicate the differences caused by this. 

Stationary distribution derivations are valid where the transition matrix is irreducible, but the data is imperfect, and this condition is not true for all transition matrices estimated from these datasets. We overcome this in two ways: (1) by only considering the largest strongly-connected component of the graph generated by considering the transition matrix as a directed network; and (2) by artificially connecting each distinct communicating class to the first state with transition weight $p_{\text{artificial}}$. 

The first approach assumes that the largest connected component of the transition matrix is representative of the chain's behaviour. To assess this we consider how many nodes are excluded from that component. For the first-order Markov models, the number of nodes ignored when only considering the largest fully connected component were 1, 1255, 68, 2 for each of the four datasets, \textsc{Lastfm}, \textsc{BrightKite}, \textsc{Wikispeedia} and \textsc{Reuters} respectively. The total number of nodes in each of these models were 23870, 43596, 3425 and 8465  respectively. The \textsc{BrightKite} dataset has the largest proportion of nodes removed (0.02879), which suggests the component can be representative.

In the LAMP models 237 (of 17766 for \textsc{Lastfm}), 2591 (of 25348 for \textsc{BrightKite}), 2775 (of 3504 for \textsc{Wikispeedia}) and 3094 (of 6822 for \textsc{Reuters}) nodes were ignored while only considering the largest fully connected component to obtain the entropy estimates for the LAMP models. These proportions are generally higher than those for the first-order Markov model, ranging from 0.7920 to 0.01334, which may result in more instability within the estimated entropy values.

The second approach forces irreducibility by ensuring that the chain mixes, and all states can be visited. It, and in particular a discussion of how to choose $p_{\text{artificial}}$ is given in Appendix~A, but a brief description is given here. This approach uses the assumption that it is possible to communicate between each possible pair of states, an assumption which is reasonable for most of our datasets, as, for instance, it is possible for users to listen to any pair of artists, but rare transitions are not likely to be observed, and so including a small transition probability to connect the network is reasonable. 
 Artificially inducing transitions between states which are otherwise disconnected, modifies the behaviour of the chain.  It follows that these artificial links should have a small weight since these transitions have not been observed within the data sample suggesting they are rare, but the weight of these links needs to be large enough to ensure mixing can occur, stable estimates can be calculated and errors introduced by computer precision are avoided. The choice of $p=2^{-15}$ and  $p=2^{-10}$ is discussed in Appendix~A.

\begin{table*}[t]
\begin{center}
\begin{tabular}{ >{\raggedright}r|r|rrrr}
                      & \multicolumn{4}{|c}{Data Set} \\
Estimate Type & Subtype & \textsc{Lastfm} & \textsc{BrightKite} & \textsc{Wikispeedia} & \textsc{Reuters} \\
\hline
 Shannon Empirical & Sequence Level & 11.83825 & 9.83813 & 9.97541 & 9.35027 \\
 Shannon Empirical & Path Level & 6.77366 & 0.88488 & 2.53676 & 5.13913 \\ 
 Shannon Empirical & Stationary Distribution & 8.20559 & 6.72001 & 6.67657 & 6.47187 \\ 
First-order Markov & Largest CC & 3.69304 & 1.67316 & 3.14187 & 4.09229\\  
 First-order Markov & Induced Irreducibility\footnotemark[1] & 3.69334 & 1.58479\footnotemark[2] & 3.93053 & 4.50134 \\ 
 LAMP & Largest CC & 4.88141 & 2.49536 & 3.18030 & 3.61305 \\
 LAMP & Induced Irreducibility\footnotemark[3]
  & 4.88092 & 2.53492 & 3.18052 & 3.61328 \\  

\end{tabular}
\caption{Entropy estimates for each of the four datasets using seven estimation methods. Two main approaches are used to ensure transition matrices are ergodic: only considering the largest connected component; and adding an artificial state with transition weight $p_{\text{artificial}}$. Both of these approaches give similar results for each of the four datasets. \label{tab:res_tab}}
\end{center}
\end{table*}

\footnotetext[1]{Values calculated using an artificial edge weight of $2^{-15}$, except for the value for the \textsc{Brightkite} dataset.}

\footnotetext[2]{This value was calculated using an articifical edge weight of $2^{-10}$, rather than the value of $2^{-15}$ used for all the other estimates. For more details see Appendix A.}

 \footnotetext[3]{Values calculated using an artificial edge weight of $2^{-15}$.}

The results are shown in \autoref{tab:res_tab}. The estimators referred to in the table are defined in detail below. At a high level:
\begin{itemize}
    \item \textit{Shannon Empirical} estimators use \begin{align*}
     H(\chi ) 
     & = - \sum_{x} \hat{p}(x) \log \hat{p}(x), 
    \end{align*}
    where $\hat{p}(x)$ is an estimated probability distribution. We consider three cases, defined below. 
    
    \item Markov estimates use the standard Markov-chain entropy given in \autoref{eqn:markov_entropy}, with empirical estimates of the transition and stationary probabilities. They do so either on the Largest Connected Component (CC) or on a chain with artificially induced irreducibility as described earlier. 
    
    \item LAMP estimates use the LAMP entropy, which is also given in \autoref{eqn:markov_entropy}, with empirical estimates of the transition and stationary probabilities, but note that the transition matrices for LAMP models will differ from those of the Markov case. These estimators also use either on the Largest Connected Component (CC) or on a chain with artificially induced irreducibility as described earlier. 
    
\end{itemize}

Shannon estimators are applied to a probability distribution, but the structure of the data provides several alternative distributions to consider. 

The data used contains multiple sample paths, \ie we have a set of $n$ subsequences $\{ y_i \}_{i=1}^n$, each composed of $m_i$ items, $ \{ s_{ij} \}_{j=1}^{m_i}$ for a total of $N$ realisations across all subsequences. 

The \textit{Sequence Level} estimator uses the frequency of each item across all subsequences to estimate its probability of occurrence, \ie
$$ \hat p(x) = \dfrac{1}{N} \sum_{i=1}^n\sum_{j=1}^{m_i} \mathbb{I}(x = s_{ij}) ,$$
where $\mathbb{I}(x = s_{ij})$ is an indicator variable which takes the value 1 if the $x = y_{i}$.

The \textit{Path Level} estimator uses the frequency of each symbol in a sequence to estimate a simple Shannon entropy for each sequence, and then averages over these, \ie
\[ \hat p_i(x) = \dfrac{1}{m_i} \sum_{j=1}^{m_i} \mathbb{I}(x = s_{ij}) ,\]
where $\mathbb{I}(x = s_{ij})$ is an indicator variable which takes the value 1 if $x = s_{ij}$, and 
\[ \hat{H}_i ( y_i )
    =  \sum_{x} \hat{p}_i(x) \log \hat{p}_i(x),
\]     
and
\[ \hat{H}( \chi )
    =  \frac{1}{n} \sum_{i=1}^n \hat{H}_i (y_i ),
\]
 

The \textit{Stationary Distribution} estimator uses as its probability distribution the stationary distribution as estimated through one of the Markov models. 


The results shown in \autoref{tab:res_tab} provide several insights:
\begin{enumerate}
    \item The Shannon sequence-level estimator grossly overestimates entropy values for all cases. The most likely scenario is that there just are not enough sampled sequences to properly represent the underlying distribution, so the calculation is essentially computing the entropy of a distribution close to the uniform distribution, \ie near the maximum possible value.
    
    \item The Shannon entropy of the stationary distribution is also a fair over-estimate indicating that the transition structure is important in these sequences. 
    
    \item The Shannon path-level estimator produces some overestimates, but also a very low value for the {\tt BrightKite} data. This anomalous result may arise here for several reasons, but it is noteworthy that the Shannon estimator can be biased on short data, and as it is being applied here to each subsequence and averages, we should expect bias in the overall estimate. At best it appears to be a sensitive and hard to interpret measure. 

    \item There is close agreement for entropy estimators using the largest connected component, and induced irreducibility. This similarity is most pronounced for the LAMP models. That suggests that the manner in which we corrected for reducible processes does not matter, but that the LAMP models are also more robust to details such as this.
    
    \item Lower estimated entropies are often indicative of a better model (for the purpose of estimating entropy) that captures more of the predictability of the sequence. The estimates generated using the LAMP models are the lowest estimates for the \textsc{Wikispeedia} and \textsc{Reuters} datasets. The lower entropy estimate values imply that more information about the sequence is captured by the LAMP model than the underlying first-order Markov model. This implies that the model captures more information about the behaviour of the system, so the generated sequence is more predictable. By better capturing the long-term dependency structures in the data, we achieve a more meaningful model. The estimates obtained using the LAMP model appear appropriate for all datasets. This supports the claims of \cite{10.1145/3038912.3052644}, that the LAMP model provides a better, succinct, representation of these datasets, but our claim is slightly stronger in the context of entropy because they compare against $k$-order and adaptive Markov chains. 
    
    \item For the \textsc{Lastfm} and \textsc{BrightKite} datasets, the estimates obtained using the first-order Markov models are lower than those obtained with the LAMP model. These two datasets represent artist sequences on a music streaming service and checked in user locations and it is possible that these sequences can be accurately captured without the LAMP structure. This is somewhat in contrast with \cite{10.1145/3038912.3052644}, but note that this finding is only in respect to the entropy estimates and there are other facets of a model that are also important. 
    
\end{enumerate}

\section{Conclusion}

By deriving a theoretical result for the entropy rate of a linear additive Markov Process, we extend the versatility of LAMPs to information-centic applications and derive a surprising equality between the entropy rate of a LAMP and the underlying first-order Markov Chain. 
A consequence of this is that realisations of processes having different long-range dependency structures -- and sample paths which are therefore quite different from each other -- might nonetheless be indistinguishable based on entropy rates.
This is because the entropy rates depend only on the underlying transition matrix structure, and not the long-range dependence structure.
The LAMP however provides a means to distinguish between such processes, by visualisation of the long-range dependency structure.
We demonstrate this via numerical simulations and use of the Cramer's V statistic.
While LAMPs exhibit long-term dependency structures, their symbol-by-symbol dependency is low due to the randomness introduced by the distribution over the choice of conditioning state.
Linear Additive Markov Processes present an intuitive yet useful extension of traditional Markov chain models to incorporate long-range dependence.
This work demonstrates that LAMPs can be understood within a traditional information-theoretic framework, and many further results are possible in this context.
Future work will explore the application of LAMPs to understanding long-range phenomena such as in social media communications \cite{mathews2017nature},
to provide enhancements to network traffic measurements such as \cite{6242413}, and to derive online estimators for entropy such as in~\cite{776004}.

\bibliographystyle{alpha}
\bibliography{LAMPBib}

\begin{thebibliography}{LTWW94}

\bibitem[Ber94]{Beran}
J.~Beran.
\newblock {\em Statistics for Long-Memory Processes}.
\newblock Chapman and Hall, New York, 1994.

\bibitem[CD07]{carpio2007long}
K.~Carpio and D.~Daley.
\newblock Long-range dependence of {Markov} chains in discrete time on
  countable state space.
\newblock {\em Journal of Applied Probability}, 44(4):1047--1055, 2007.

\bibitem[CT91]{CoverThomas}
T.~Cover and J.~Thomas.
\newblock {\em Elements of information theory}.
\newblock Wiley series in telecommunications. Wiley, New York, 1991.

\bibitem[KMST17]{10.1145/3038912.3052644}
R.~Kumar, M.Raghu, T.~Sarl\'{o}s, and A.~Tomkins.
\newblock Linear additive {M}arkov processes.
\newblock In {\em Proceedings of the 26th International Conference on World
  Wide Web}, WWW '17, page 411–419, Republic and Canton of Geneva, CHE, 2017.
  International World Wide Web Conferences Steering Committee.

\bibitem[LTWW94]{leland94:_self_simil_natur_ether_traff_exten_version}
W.~Leland, M.~Taqqu, W.~Willinger, and D.~Wilson.
\newblock On the self-similar nature of {E}thernet traffic (extended version).
\newblock {\em IEEE/ACM Transactions on Networking}, 2(1):1--15, Feb 1994.

\bibitem[MMNB17]{mathews2017nature}
P.~Mathews, L.~Mitchell, G.~Nguyen, and N.~Bean.
\newblock The nature and origin of heavy tails in retweet activity.
\newblock In {\em Proceedings of the 26th International Conference on World
  Wide Web Companion}, pages 1493--1498, 2017.

\bibitem[Nor98]{norris1998markov}
J.~Norris.
\newblock {\em Markov chains}.
\newblock Number~2. Cambridge University Press, 1998.

\bibitem[NR13]{6242413}
H.~Nguyen and M.~Roughan.
\newblock Rigorous statistical analysis of {I}nternet loss measurements.
\newblock {\em IEEE/ACM Transactions on Networking}, 21(3):734--745, 2013.

\bibitem[RVA98]{776004}
M.~Roughan, D.~Veitch, and P.~Abry.
\newblock On-line estimation of the parameters of long-range dependence.
\newblock In {\em IEEE GLOBECOM}, volume~6, pages 3716--3721, 1998.

\end{thebibliography}

\begin{appendices}
\section{On connecting a Markov chain: Choosing artificial states and connection probabilities}
For a unique stationary distribution of a Markov Chain to exist, it is necessary for the Markov Chain to be ergodic \cite{norris1998markov}. By adding an artificial state to the chain, which is connected to all other states with some transition probability  $p_{\text{artificial}}$, we can guarantee the chain is ergodic, and that a unique stationary distribution exists. 
The choice of hyperparameter $p_{\text{artificial}}$ is critical to ensuring this artificial state does not dramatically alter the behaviour of the Markov chain and instead acts as a link between otherwise disconnected communicating classes.

The parameter $p_{\text{artificial}}$ was chosen for each of the four datasets for both the LAMP and first-order Markov Chain approaches by calculating entropy estimates for $p_{\text{artificial}} = 2^{-i}$ for $i = 1,....,25$ for the first-order Markov estimates and $i = 1, ..., 50$ for the LAMP estimates. Since we are selecting for small values of $p_{\text{artificial}}$, it is also important to ensure we avoid numerical precision errors. It is important that the estimated value is robust to small changes in $p_{\text{artificial}}$. By selecting values of $p_{\text{artificial}}$ for which the estimate is stable and appears to have converged, we can estimate the entropy of the process. The true entropy value will differ between models and datasets, so we are not concerned with the value which the estimate converges to, only the stability of the estimate.

Generally, these estimates approach a stable value for larger values of $i$, although it is not known if this trend would continue, due to precision limitations. The largest value for which the estimate is stable was taken to perform each of the estimates was used.

The results of these simulations are shown in Figure \ref{fig:p_sims}. For all dataset model combinations except for the entropy estimates obtained using the first-order Markov model for the \textsc{Wikispeedia} and the \textsc{Reuters} datasets, the entropy values approach a limiting value from above, with most estimates converging to a stable value. The only estimate which does not appear to converge is the estimate obtained with the first-order model for the \textsc{BrightKite} dataset, which begins to flatten between -7 to -10, but then approaches zero. This may be due to numerical limitations, but this behaviour will be explored in future work.

In Figure \ref{fig:p_sims}, the entropy estimates are normalised to have a minimum value of 0 and a maximum value of 1, to enable the shape of the curves to be easily compared. This highlights the shape of the curve and emphasises the direction of convergence. These normalised convergence curves for the \textsc{Wikispeedia} and \textsc{Reuters} datasets overlap for the first-order Markov model, so the values for the \textsc{Reuters} dataset are offset by +0.04 for visualisation. Similarly, for the LAMP model convergence curves, a small offset is added to \textsc{Reuters} and \textsc{BrightKite} datasets in the LAMP model visualisation (+0.01 and -0.005 respectively).

The choice of $p_{\text{artificial}} = 2^{-15}$ appears appropriate for all datasets where the estimates appear to converge, while remaining large enough to avoid precision errors. This artificial transition probability is equivalent to observing a single transition amongst 32 768, which is sufficiently small to have a minimal impact on the scales we are considering. It is important that this choice is evaluated for different datasets, where the size of the data may make a different value more suitable.

\begin{figure}[t!]
    \centering
    \includegraphics[width=0.43\textwidth]{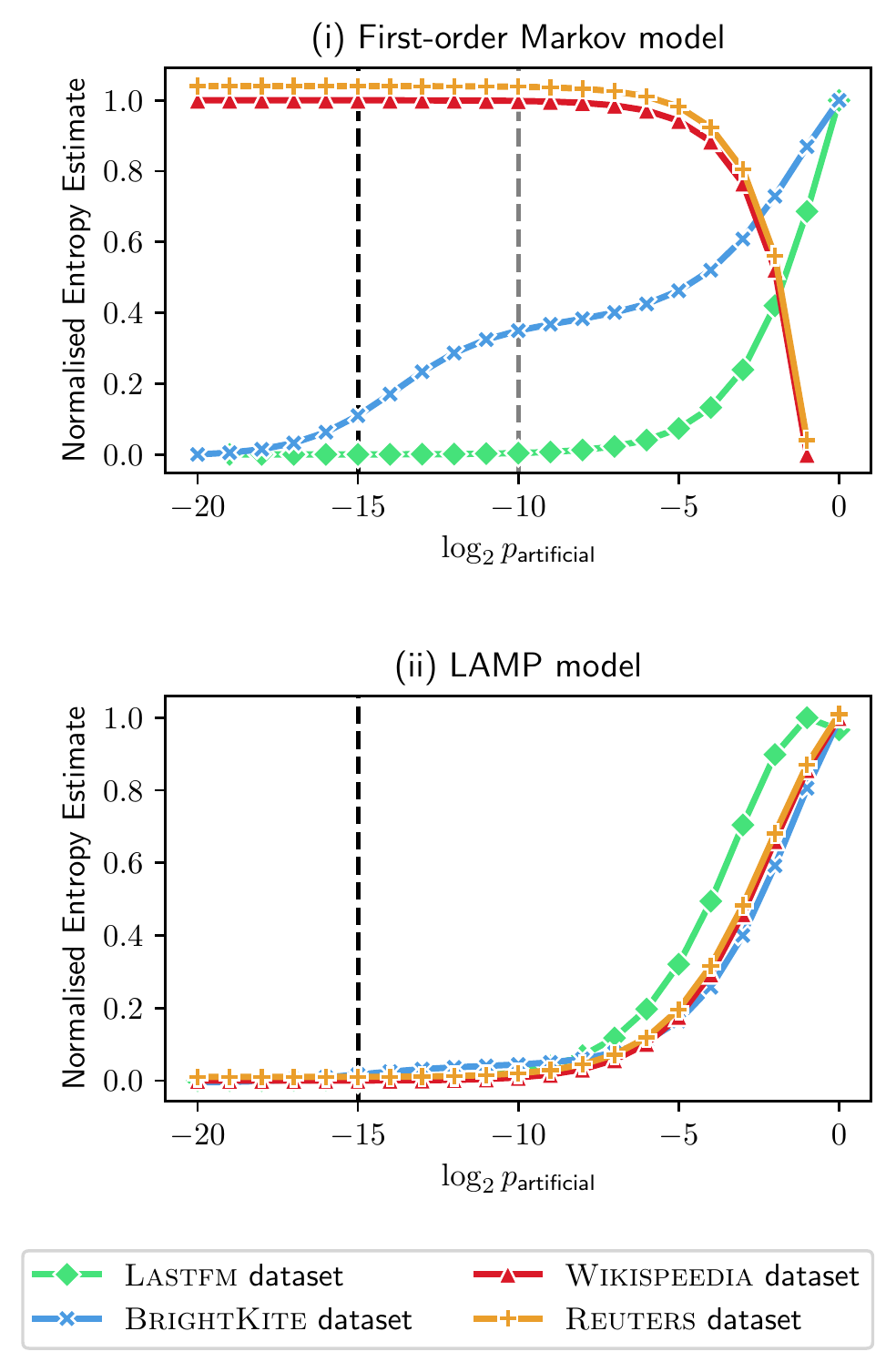}
    \caption{Plots to show convergence for the normalised entropy estimate value against $\log_2 p_{\text{artificial}}$, which was used to ensure ergodicity. (i) Shows hyperparameter sensitivity for the first-order Markov models and (ii) for the LAMP models. Each plot shows the effect of the weight of this artificial link on both the first-order Markov model estimate and the estimate obtained using the LAMP model for each dataset. We aim to find a region where the estimates are insensitive to the choice of hyperparameter. A dashed black line on each plot indicates the value when the artificial link weight is $2^{-15}$. This value was chosen as a global value, since it is a reasonable choice for all dataset model combinations, apart from the BrightKite dataset first-order Markov model, when a value of $2^{-10}$ was used to obtain the final estimate. This alternative value is indicated by a grey dashed line. The convergence curve for the \textsc{Wikispeedia} and \textsc{Reuters} datasets overlapped for the first-order Markov model, so the values for the \textsc{Reuters} dataset was offset by +0.04 for visualisation. Small vertical offset was also added to the \textsc{Reuters} and \textsc{BrightKite} datasets in the LAMP model visualisation (+0.01 and -0.005 respectively).}
    \label{fig:p_sims}
\end{figure}
\end{appendices}

\newpage

\end{document}